 \pgfplotsset{compat=newest}
\title{Isn't Hybrid ARQ Sufficient?}
\author{
\authorblockN{Michael Heindlmaier\IEEEauthorrefmark{1} and Emina Soljanin\IEEEauthorrefmark{2}}
\authorblockA{
\IEEEauthorrefmark{1}Institute for Communications Engineering, Technische Universit\"at M\"unchen,
Munich, Germany\\
\IEEEauthorrefmark{2}Bell Labs, Alcatel-Lucent,
Murray Hill NJ 07974, USA\\
Email: michael.heindlmaier@tum.de, emina@alcatel-lucent.com}
}
\newtheorem{thm}{Theorem}
\begin{document}
\maketitle
\begin{abstract}
In practical systems, reliable communication is often accomplished by coding at different network layers. We question the necessity of this approach and examine when it can be beneficial. Through conceptually simple probabilistic models (based on coin tossing), we argue  that multicast scenarios and  protocol restrictions may make concatenated multi-layer coding  preferable to physical layer coding  alone, which is mostly not the case in point-to-point communications.
\end{abstract}

\section{Introduction}
\subsection{Motivation}
In packet-based data networks, large files are usually segmented into smaller blocks that are put into transport packets. Packet losses occur because of the physical channel and other limitations, such as processing power and buffer space. In current wireless systems, reliable communication is accomplished by coding at different network layers.

At the physical layer, a special transmission scheme, known as incremental redundancy Hybrid ARQ (IR-HARQ), which combines the conventional ARQ with error correction, has been in use since the appearance of 3G wireless technology (see, for example \cite{HARQart} for an overview).
IR-HARQ schemes adapt their error code redundancy, based on the receiver's feedback, to varying channel conditions, and thus achieve better throughput performance than ordinary ARQ.

In broadcast/multicast applications from a single sender to many receivers, however, it is costly for the sender to collect and respond to individual receiver feedbacks, and thus HARQ schemes are disabled and packet losses are inevitable. With the rapid increase in multicast streaming applications, we see more and more proposals for packet level rateless erasure coding. A number of these schemes have already been standardized and are currently being implemented and deployed,
e.g., Raptor codes for LTE eMBMS \cite{RCmonograph}. At the packet level, rateless codes enable efficient communications over multiple, unknown erasure channels,  by asymptotically and simultaneously achieving the channel capacity at all erasure rates.

Although IR-HARQ, as a unicast technique, is disabled in multicast systems,  physical layer coding at some chosen fixed rate remains. When this code fails to decode the noisy version of a data packet at the physical layer, the packet is declared erased, and data recovery is left to the packet level code.
The rate of the physical layer code affects the successful-transmission time both positively and negatively. Increasing the rate, increases the number of channel uses, but lowers the packet erasure rate, which in turn means that fewer application layer rateless code packets will be sufficient for successful data transmission.

In practice, packet level coding schemes are not rateless. Standardized coding schemes start with a time-limited broadcast {\it delivery phase} by a content server using systematic Raptor codes with some
fixed redundancy. This phase is often too short for all broadcast clients to collect a sufficient number of Raptor code symbols to be able to decode. Hence, once the delivery (broadcast) session expires, a unicast based file repair mechanism becomes available to the users who had experienced bad channel realizations and were not able to decode. These users enter the {\it repair phase,} during which the missing data is delivered by dedicated repair servers through unicasts with no packet level coding.

\subsection{Related Work}
Because of its practical relevance, this problem has been addressed in multiple ways, but only to a limited extent. Information theoretic analysis for single user scenarios suggest that channel codes be implemented entirely at the physical layer, where they can most efficiently combat fading 
\cite{berger2008optimizing,courtade2011optimal}. Furthermore, concatenated scheme involving coding at the physical as well as at a higher layer would be suboptimal.

This and related setups have also been investigated in \cite{koller2011optimal, 6068196, cui2009achievable, teerapittayanon2012network, vehkapera2005throughput,xiao2011cross}.
Reference \cite{koller2011optimal} studies the code rate tradeoff for both unicast and multicast setups where individual links are modeled as binary symmetric channels. Without packetization constraints, it turns out that in this case pure physical layer coding is optimal. In practice systems however, packetization is inevitable to some extent.

Reference \cite{6068196} investigates the interplay between rate allocation and ARQ for point-to-point links, but do not consider coding at the packet level.
Reference \cite{cui2009achievable} investigates the tradeoff between physical layer and network layer rate allocation for networks in order to improve throughput, but does not study the possibility of HARQ.
In an experimental setup, \cite{teerapittayanon2012network} studies the application of packet layer coding over a WiMAX point-to-point link: The best performance was observed with HARQ disabled and relying only on packet coding as reliability mechanism. 
The works \cite{vehkapera2005throughput, xiao2011cross} compute the optimal physical layer code rate when packet level codes are operated in a rateless fashion, but do not study the impact of HARQ.

\subsection{Our Approach}
Our approach is to simplify the transmission model in order to better understand the interplay between the inner and the outer code for several single and multiuser scenarios that are motivated by current practice. The physical layer is modeled as a memoryless symbol erasure channel with fixed erasure probability $\epsilon_s$ which is given by the channel conditions.
Successive symbol transmissions can thus be interpreted as flipping a biased coin, where the bias is given by system constraints and cannot be manipulated.

A packet contains $k_s$ symbols and is successfully recovered if any $k_s$ symbols are received, otherwise erased. The number of transmitted symbols, $n_s$, determines the coding rate of the physical layer. This rate can be adapted by means of HARQ, so $n_s$ can vary from packet to packet in principle.
For a fixed $n_s$ successive packets are erased independently. Again, successive packet transmissions can thus be interpreted as flipping a biased coin where the bias can be adapted with the choice of $n_s$.

This simple coin tossing model is easy to understand but does not accurately represent the characteristics of the wireless channel.
We still believe that the following essential feature is well approximated:
In wireless systems with opportunistic link adaptation, the number of symbols needed to be transmitted for successful packet decoding is random because of imperfect channel knowledge. This is also captured by the coin tossing model.

The model does not capture the delay that is introduced by a practical implementation of HARQ schemes due to feedback delay or packet scheduling issues.
These effects also have an impact on the delay performance \cite{larmo2009lte}.
The goal of this work is to show situations where, even without delay aspects, optimal coding operations become nontrivial.

This paper is organized as follows: In Sec.~\ref{sec:SM}, we present the system model under study. Different schemes for point-to-point scenarios are analyzed in Sec.~\ref{sec:pps}. We focus on the multiuser case in Sec.~\ref{sec:ms}.


\section{System Model and Problem Formulation\label{sec:SM}}
\subsection{Data, Channel, and Code Models}
We are transmitting a large data file of $M$ chunks, each chunk has $k_p$ packets, and each packet $k_s$ channel symbols. Consequently, each chunk has $k_p\cdot k_s$ physical channel symbols. In transmission, the $k_s$ data symbols of each packet may be protected by a channel code of length $n_s$ at the physical layer, and the $k_p$ data packets in each chunk may be protected by a packet level code of length $n_p$. This is visualized in Fig.~\ref{fig:packetization}.
 \begin{figure}[ht]
  \centering
  \psfrag{a}{$k_p$}
    \psfrag{b}{$k_s$}
      \psfrag{c}{$n_s$}
       \psfrag{d}{$n_p$}
       \psfrag{ch}{chunks}
       \psfrag{pa}{packets}
       \psfrag{sy}{symbols}
  \includegraphics[width=0.95\columnwidth]{./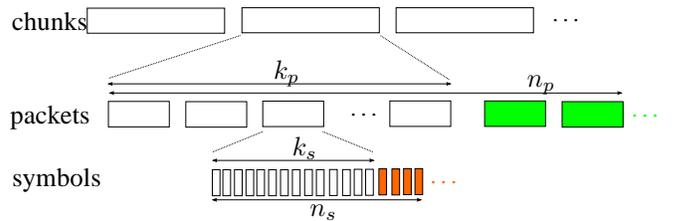}
  \caption{System model for packetization}
  \label{fig:packetization}
 \end{figure}
 
The physical layer is modeled by a memoryless erasure channel with symbol erasure probability $\epsilon_s$, assumed to be given. Coding is assumed to be such that a
packet is successfully received if (any) $k_s$ symbols are received, and a chunk is received if (any) $k_p$ packets are received.

\subsection{User, Redundancy, and Feedback Models}
We will be considering {\it point-to-point} as well as {\it multicast} scenarios to $u$ users.
In a point-to-point scenario, either the inner or the outer code may be rateless \cite{RCmonograph}, that is, coded symbols and packets may be sent until the packets and chunks are decoded. Note that if the inner (physical layer) code is rateless, the outer (packet level) code is not necessary.

In multicast, the inner (physical channel) code may have a fixed rate, or the coded symbols may be transmitted until some $\ell$ users are able to decode. Note that $\ell\le u$, and if $\ell=u$, then the outer (packet level)  code is not needed.

We assume that each receiver can send a feedback signal at any moment on either level.  The feedback is instantaneous, noiseless, and signifies only the completion of decoding. Under this assumption, the transmitter can send coded symbols (i.e., incremental redundancy) one symbol at the time, and does not have to send coded symbols that are not necessary for decoding.

\subsection{Transmission Objectives}
\subsubsection{Expected File/Chunk Download Time}
The goal of each user is to eventually download the file/chunk. Therefore, the most natural transmission objective is to minimize the expected time at which all user have successfully downloaded the file, which in general requires rateless transmission. In this paper, we will be mainly concerned with this objective, and our goal will be to find out
at which level the rateless transmission should take place. Moreover, if the rateless coding is done only at the packet level, we are interested to find out which coding rate at the physical layer minimizes the overall multicast transmission time in this case.

\subsubsection{Probability of Decoding Within Some Allocated Time}
In practice, standardized coding schemes start with a time-limited broadcast delivery phase by a content server using systematic Raptor codes with some fixed redundancy. This phase is often too short for all broadcast clients to collect a sufficient number of Raptor code symbols to be able to decode. Therefore, another reasonable objective is to maximize probability of decoding under a delivery-time constraint, which in turn maximizes the expected number of users who are able to download the file in within this time and will not have to go through the file repair phase.

\section{Point-to-Point Scenarios}
\label{sec:pps}
Motivated by some common state-of-the-art schemes, we consider  the following three point-to-point scenarios depending on the type of coding used at the (inner) physical layer:

\setcounter{subsubsection}{0}
\subsubsection{Infinite Incremental Redundancy (IIR)} This is an idealized rateless scheme where transmission at the physical layer of coded symbols is done until $k_s$ symbols are received.  Consequently, no packet gets erased, and thus no coding or retransmission at the packet level is needed.

\subsubsection{Fixed Redundancy (FR)} This scheme is motivated by the one in the LTE eMBMS protocol, where a fixed rate code $(n_s,k_s)$ is used at the physical layer. Consequently, some packets will be erased, and coding or retransmission at the packet level is required.

\subsubsection{Finite Incremental Redundancy (FIR)} This scheme most closely resembles the current IR-HARQ implementations. Here again, a fixed rate code $(n_s,k_s)$ is used at the physical layer,
and transmission of coded symbols is done until either $k_s$ symbols are received or $n_s$ coded symbols are used up (whichever happens first). Consequently, some packets will be erased, and coding or retransmission at the packet level is required.

Recall that, since our objective is to ultimately decode the chunk, either the inner or the outer code must be rateless, that is, must allow unlimited number of transmissions until decoding can be performed. Let $T_i^s$,  be the 
number of channel symbols used for the $i$-th packet transmission, and $T^p$ the number of (coded) packets that have to be sent to decode the chunk. Then decoding of the chunk requires
\begin{align}
 \sum_{i=1}^{T^p} T_i^s \label{eq:all_slots}
\end{align}
channel transmissions. Note that, depending on the scenario, either $T_i^s$ or
$T^p$ or both can be random variables.

We next describe the three point-to-point scenarios, derive the expected number of channel transmissions necessary for decoding, and provide some quantitative examples. We conclude the section by showing that using rateless coding at the inner rather than at the outer layer results in fewer number of transmissions on average.

\subsection{Infinite Incremental Redundancy (IIR) Schme}
In the IIR scheme, the code is rateless at the physical layer,  meaning that coded symbols are transmitted
over the channel with the erasure probability $\epsilon_s$
until $k_s$ of them are received, at which point the receiver can decode the data.
Therefore, the number of transmissions is a negative binomial random variable with parameters $(k_s,1-\epsilon_s)$\footnote{$1-\epsilon_s$ is the probability of success in the associated Bernoulli trials.}, which we will denote by $NB(k_s,1-\epsilon_s)$
The number of channel transmissions required to successfully decode $T^p=k_p$ packets is the sum of these random variables, and thus itself a negative binomial random variable with parameters $(k_p\cdot k_s,1-\epsilon_s)$. We denote this time by $T^{\text{IIR}}$. Its expected value is given by
\begin{equation}
\mathbb{E}\bigl[T^{\text{IIR}}\bigr]=\frac{k_p\cdot k_s}{1-\epsilon_s}
\label{eq:TIIR}
\end{equation}

\subsection{Fixed Redundancy (FR) Scheme}
In the FR scheme, a fixed-rate code $(n_s,k_s)$ is used at the physical layer, and thus it always takes  $T_i^s=n_s$ channel transmissions per data packet, at which point the packet is either successfully decoded (when $k_s$ or more channel transmissions are successful) or erased with probability $\epsilon_p$, given by
\begin{equation}
\epsilon_p = \sum\limits_{j=k_s+1}^{n_s}\binom{n_s}{j}\epsilon_s^j(1-\epsilon_s)^{(n_s-j)}.
\label{eq:ep}
\end{equation}
The FR scheme is rateless at the packet level. Therefore, the number of packet transmissions over the channel with the erasure probability $\epsilon_p$ until $k_p$ transmissions are successful (at which point the receiver can decode the data) is a negative binomial random variable  with parameters $(k_p,1-\epsilon_p)$. Since each packet transmission takes exactly $n_s$ channel transmissions, the number of channel transmissions $T^{\text{FR}}$ required to decoded the file is not a negative binomial random variable, but its expected value is given by
\begin{equation}
\mathbb{E}\bigl[T^{\text{FR}}\bigr]=\frac{k_p\cdot n_s}{1-\epsilon_p}.
\label{eq:TFR}
\end{equation}

Note that it is not immediately clear how (\ref{eq:TFR}) compares to (\ref{eq:TIIR}) since $n_s\ge k_s$ but $\epsilon_s\ge \epsilon_p$. However, we show below that $\mathbb{E}\bigl[T^{\text{FR}}\bigr]\ge \mathbb{E}\bigl[T^{\text{IIR}}\bigr]$, that is, using  rateless coding at the inner rather than at the outer layer results in fewer number of transmissions on average.

\subsection{IIR vs.\ FR}
\begin{thm} 
$\mathbb{E}\bigl[T^{\text{FR}}\bigr]\ge \mathbb{E}\bigl[T^{\text{IIR}}\bigr]$
\end{thm}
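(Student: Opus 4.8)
The plan is to reduce the inequality to an elementary fact about one binomial random variable. Both (\ref{eq:TFR}) and (\ref{eq:TIIR}) carry the common factor $k_p>0$, and the denominators $1-\epsilon_p$ and $1-\epsilon_s$ are strictly positive ($\epsilon_s<1$ is implicit, and $\epsilon_p<1$ since $n_s\ge k_s$, so receiving all $n_s$ symbols --- an event of positive probability --- already yields a non-erased packet). Cancelling $k_p$ and cross-multiplying by the positive quantities $1-\epsilon_p$ and $1-\epsilon_s$, the claim $\mathbb{E}\bigl[T^{\text{FR}}\bigr]\ge\mathbb{E}\bigl[T^{\text{IIR}}\bigr]$ becomes equivalent to
\begin{align}
 n_s\,(1-\epsilon_s)\ \ge\ k_s\,(1-\epsilon_p). \label{eq:reduced}
\end{align}

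To establish (\ref{eq:reduced}), I would let $R$ denote the number of the $n_s$ transmitted symbols of a single packet that are received. By the channel model $R\sim\mathrm{Bin}(n_s,1-\epsilon_s)$, so the left-hand side of (\ref{eq:reduced}) is exactly $\mathbb{E}[R]=n_s(1-\epsilon_s)$. For the right-hand side, by the code model a packet fails to be erased precisely when at least $k_s$ of its symbols are received, so $1-\epsilon_p=\mathbb{P}(R\ge k_s)$. Hence (\ref{eq:reduced}) is just $\mathbb{E}[R]\ge k_s\,\mathbb{P}(R\ge k_s)$, which follows by discarding the nonnegative terms with $r<k_s$ and then bounding $r\ge k_s$ on the remaining range:
\begin{align*}
 \mathbb{E}[R]=\sum_{r=0}^{n_s} r\,\mathbb{P}(R=r)\ \ge\ \sum_{r=k_s}^{n_s} r\,\mathbb{P}(R=r)\ \ge\ k_s\!\!\sum_{r=k_s}^{n_s}\!\mathbb{P}(R=r)= k_s\,\mathbb{P}(R\ge k_s).
\end{align*}
This proves (\ref{eq:reduced}) and hence the theorem.

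I do not expect a genuine obstacle here: the whole content is the reformulation (\ref{eq:reduced}) together with the identity $1-\epsilon_p=\mathbb{P}(R\ge k_s)$, after which the bound $\mathbb{E}[R]\ge k_s\,\mathbb{P}(R\ge k_s)$ is one line. The only point requiring a little care is verifying that the denominators are strictly positive so that the first, algebraic, step is legitimate. If one wants more than the statement, essentially the same observation upgrades to a sample-path comparison: coupling the two schemes to a single i.i.d.\ $\mathrm{Bernoulli}(1-\epsilon_s)$ sequence of symbol outcomes, the number of channel uses under FR --- the end of the first length-$n_s$ block that already contains $k_s$ successes --- dominates the number under IIR --- the first epoch at which $k_s$ successes have accumulated --- on every sample path, which recovers (\ref{eq:reduced}) after taking expectations and additionally gives stochastic dominance of $T^{\text{FR}}$ over $T^{\text{IIR}}$.
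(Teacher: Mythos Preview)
Your argument is correct and is essentially the paper's own proof: the paper introduces the same binomial variable (called $S$ there) and invokes Markov's inequality $P(S\ge k_s)\le \mathbb{E}[S]/k_s$, which is exactly your bound $\mathbb{E}[R]\ge k_s\,\mathbb{P}(R\ge k_s)$ written the other way around. Your closing coupling remark, giving pathwise (hence stochastic) dominance of $T^{\text{FR}}$ over $T^{\text{IIR}}$, is a nice strengthening that the paper does not state.
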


\begin{proof}
Consider the non-negative random variable $S$ corresponding to the number of successfully received symbols after $n_s$ transmission. Note that $S$ is binomial with parameters $n_s$ and $1-\epsilon_s$, and therefore its expectation is $n_s\cdot(1-\epsilon_s)$.  By the definition of $\epsilon_p$ and the
Markov's inequality (see e.g., \cite[p.~116]{Flajolet:2009}), we have
\begin{align}
1-\epsilon_p & = P(S\ge k_s)\le \frac{n_s\cdot(1-\epsilon_s)}{k_s},
\end{align}
which gives
\[
\frac{k_s}{1-\epsilon_s} \le \frac{n_s}{1-\epsilon_p},
\]
and the claim follows from \eqref{eq:TIIR} and \eqref{eq:TFR}.
\end{proof}

Therefore rateless coding at the inner rather than at the outer layer results in fewer channel uses on average for chunk download. However, when $\epsilon_s$ is known, then 
$n_s$ can be optimized to minimize $n_s/(1-\epsilon_p)$, then $\mathbb E \bigl[ T^{\text{IIR}}\bigr]$ and $\mathbb E \bigl[ T^{\text{FR}}\bigr]$ do not differ much, as shown by an example in Fig.~\ref{fig:delay_somp_12}.
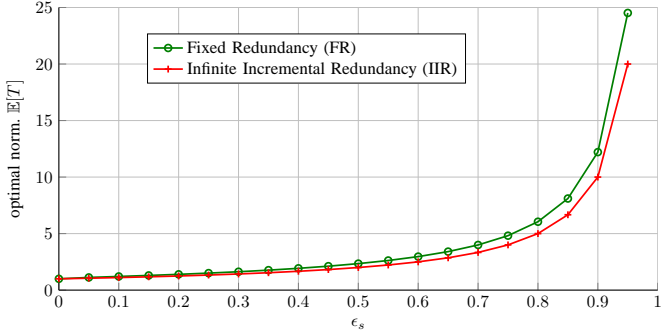
\begin{figure}[hbt]
\centering
%
%
%
\definecolor{mycolor1}{rgb}{0,0.498039215686275,0}%
\begin{tikzpicture}[scale=0.65]

\begin{axis}[%
width=4.82222222222222in,
height=2.2775in,
scale only axis,
xmin=0,
xmax=1,
xlabel={$\epsilon_s$},
xmajorgrids,
ymin=0,
ymax=25,
ylabel={optimal norm. $\mathbb E[T]$},
ymajorgrids,
axis x line*=bottom,
axis y line*=left,
legend style={at={(0.147470238095237,0.731301679060594)},anchor=south west,draw=black,fill=white,legend cell align=left}
]
\addplot [
color=mycolor1,
solid,
line width=1.0pt,
mark=o,
mark options={solid}
]
table[row sep=crcr]{
0 1\\
0.05 1.12090248132584\\
0.1 1.20746294329216\\
0.15 1.29796837618785\\
0.2 1.39542582699089\\
0.25 1.50385689267252\\
0.3 1.62605273839824\\
0.35 1.76550252495985\\
0.4 1.92723395875817\\
0.45 2.11694395734318\\
0.5 2.34367410352434\\
0.55 2.61990996410649\\
0.6 2.96431814106949\\
0.65 3.40615852573886\\
0.7 3.9944346381425\\
0.75 4.8169981180933\\
0.8 6.0497504461799\\
0.85 8.10302878993152\\
0.9 12.2076808002871\\
0.95 24.5183044296942\\
};
\addlegendentry{Fixed Redundancy (FR) };

\addplot [
color=red,
solid,
line width=1.0pt,
mark=+,
mark options={solid}
]
table[row sep=crcr]{
0 1\\
0.05 1.05263157894737\\
0.1 1.11111111111111\\
0.15 1.17647058823529\\
0.2 1.25\\
0.25 1.33333333333333\\
0.3 1.42857142857143\\
0.35 1.53846153846154\\
0.4 1.66666666666667\\
0.45 1.81818181818182\\
0.5 2\\
0.55 2.22222222222222\\
0.6 2.5\\
0.65 2.85714285714286\\
0.7 3.33333333333333\\
0.75 4\\
0.8 5\\
0.85 6.66666666666667\\
0.9 9.99999999999999\\
0.95 20\\
};
\addlegendentry{Infinite Incremental Redundancy (IIR) };

\end{axis}
\end{tikzpicture}%
\caption{Normalized delay for FR and IIR for $k_s = 100$. $n_s$ was chosen to minimize 
$\mathbb E [ T^{\text{FR}}]$.}
\label{fig:delay_somp_12}
\end{figure}

Note instantaneous and perfect feedback should be possible after each symbol for IIR, and therefore, when the channel is known and stays constant, FR beats IIR in practice. However,
in current practical wireless scenarios, the channel is unpredictable, and a HARQ scheme in use is similar to the one we consider in the next section. 

\subsection{Finite Incremental Redundancy (FIR) Scheme}

In the FIR scheme,  both $T^p$ or $T_i^s$ are random variables in contrast to the previous schemes:
At the physical layer at most $n_s$ symbols are transmitted for each transmission, so $k_s \leq T_i^s \leq n_s$.
The associated erasure probability $\epsilon_p$ is as in \eqref{eq:ep}.
Let $X_i$ denote the number of symbols that would be needed to receive the $i$-th transmitted packet correctly.
$X_i$ is distributed according to a negative binomial distribution with parameters $k_s$ and $1-\epsilon_s$.
The random variable $T_i^s$ is now given by $T_i^s = \min\{X_i,n_s\}$. As the sequence $(T_i^s)_{i=1}^{\infty}$ is i.i.d., one can use Wald's generalized equation\footnote{The general version of Wald's equation should be used because $T^p$ is a stopping time that cannot directly be defined as a function of $(T_i^s)_{i=1}^{\infty}$.} \cite[Theorem 5.5.2]{gallager2013discrete} to show that
\begin{align}
 \mathbb E\bigl[ T^{\text{FIR}} \bigr] = \mathbb E\bigl[ T^p \bigr] \mathbb E \bigl[ T_i^s \bigr].
\end{align}
$\mathbb E\bigl[ T^p \bigr]$ is the expected value of a negative binomial random variable with parameters $k_p$ and $1-\epsilon_p$.
The second expected value is given by
\begin{align}
 \mathbb E[T_i^s] &= \sum_{n=k_s}^{n_s} n {n-1 \choose k_s-1} (1-\epsilon_s)^{k_s} \epsilon_s^{n-k_s} \nonumber \\
 &+ n_s \sum_{n=n_s+1}^{\infty} {n-1 \choose k_s-1} (1-\epsilon_s)^{k_s} \epsilon_s^{n-k_s} 
\end{align}
and cannot be solved in closed form. Hence,
\begin{align}
 \mathbb E\bigl[ T^{\text{FIR}} \bigr] = \frac{k_p}{1-\epsilon_p} \mathbb E \bigl[ T_i^s \bigr].
\end{align}

As for FR, the performance of the FIR scheme depends on the choice of $n_s$. Choosing $n_s$ too small will result in a high packet erasure probability $\epsilon_p$. Letting $n_s \rightarrow \infty$ the scheme approaches the performance of the IIR scheme.
This is also shown in Fig.~\ref{fig:delay_somp_13}, which visualizes the dependency of the expected delay on the choice of $n_s$.
\begin{figure}[hbt]
\centering
%
%
%
\definecolor{mycolor1}{rgb}{0,0.498039215686275,0}%
\definecolor{mycolor2}{rgb}{1,0.315789473684211,0}%
\begin{tikzpicture}[scale=0.6]

\begin{axis}[%
width=4.82222222222222in,
height=3.749in,
scale only axis,
xmin=100,
xmax=160,
xlabel={$n_s$},
xmajorgrids,
ymin=1,
ymax=2,
ylabel={normalized $\mathbb E[T]$},
ymajorgrids,
legend style={at={(0.599925595238095,0.696665804002761)},anchor=south west,draw=black,fill=white,legend cell align=left}
]
\addplot [
color=mycolor1,
solid,
line width=1.0pt,
mark=o,
mark options={solid}
]
table[row sep=crcr]{
 110 2.4252991931408\\
111 1.95838095320779\\
112 1.66779842527401\\
113 1.48334940718986\\
114 1.36587384926394\\
115 1.29214979211458\\
116 1.24760816384735\\
117 1.22266057419746\\
118 1.2107813359361\\
119 1.20746294329216\\
120 1.20960571149978\\
121 1.21512140343061\\
122 1.22264723805384\\
123 1.23132738695839\\
124 1.24064618500609\\
125 1.25030568778411\\
126 1.26014066624759\\
127 1.27006302646396\\
128 1.28002752315391\\
129 1.29001172522077\\
130 1.30000487728876\\
131 1.31000198261534\\
132 1.32000078822483\\
133 1.33000030672164\\
134 1.34000011690558\\
135 1.35000004367373\\
136 1.36000001600229\\
137 1.37000000575426\\
138 1.38000000203187\\
139 1.39000000070493\\
140 1.40000000024042\\
141 1.41000000008065\\
142 1.42000000002662\\
143 1.43000000000865\\
144 1.44000000000277\\
145 1.45000000000087\\
146 1.46000000000027\\
147 1.47000000000008\\
148 1.48000000000003\\
149 1.49000000000001\\
150 1.5\\
151 1.51\\
152 1.52\\
153 1.53\\
154 1.54\\
155 1.55\\
156 1.56\\
157 1.57\\
158 1.58\\
159 1.59\\
160 1.6\\
};
\addlegendentry{FR E[T]};

\addplot [
color=blue,
dashed,
line width=1.0pt
]
table[row sep=crcr]{
100 1\\
101 1.01\\
102 1.02\\
103 1.03\\
104 1.04\\
105 1.05\\
106 1.06\\
107 1.07\\
108 1.08\\
109 1.09\\
110 1.1\\
111 1.11\\
112 1.12\\
113 1.13\\
114 1.14\\
115 1.15\\
116 1.16\\
117 1.17\\
118 1.18\\
119 1.19\\
120 1.2\\
121 1.21\\
122 1.22\\
123 1.23\\
124 1.24\\
125 1.25\\
126 1.26\\
127 1.27\\
128 1.28\\
129 1.29\\
130 1.3\\
131 1.31\\
132 1.32\\
133 1.33\\
134 1.34\\
135 1.35\\
136 1.36\\
137 1.37\\
138 1.38\\
139 1.39\\
140 1.4\\
141 1.41\\
142 1.42\\
143 1.43\\
144 1.44\\
145 1.45\\
146 1.46\\
147 1.47\\
148 1.48\\
149 1.49\\
150 1.5\\
151 1.51\\
152 1.52\\
153 1.53\\
154 1.54\\
155 1.55\\
156 1.56\\
157 1.57\\
158 1.58\\
159 1.59\\
160 1.6\\
};
\addlegendentry{FR asymptote};

\addplot [
color=mycolor2,
solid,
line width=1.0pt,
mark=+,
mark options={solid}
]
table[row sep=crcr]{
110 2.40589399348813\\
111 1.93485074916447\\
112 1.63949835361543\\
113 1.44958655957232\\
114 1.32593033301072\\
115 1.24531286454819\\
116 1.19320354271648\\
117 1.16008326891556\\
118 1.13951826995805\\
119 1.12710360661118\\
120 1.11984035354465\\
121 1.11572939129292\\
122 1.11347991753101\\
123 1.11228975972663\\
124 1.11168049081298\\
125 1.11137840685848\\
126 1.11123317445929\\
127 1.11116538786495\\
128 1.11113463451003\\
129 1.11112105715261\\
130 1.1111152174263\\
131 1.11111276792868\\
132 1.11111176494448\\
133 1.11111136366894\\
134 1.11111120666978\\
135 1.11111114655103\\
136 1.11111112400285\\
137 1.11111111571362\\
138 1.11111111272472\\
139 1.11111111166697\\
140 1.11111111129936\\
141 1.11111111117382\\
142 1.11111111113167\\
143 1.11111111111774\\
144 1.11111111111322\\
145 1.11111111111177\\
146 1.11111111111132\\
147 1.11111111111117\\
148 1.11111111111113\\
149 1.11111111111112\\
150 1.11111111111111\\
151 1.11111111111111\\
152 1.11111111111111\\
153 1.11111111111111\\
154 1.11111111111111\\
155 1.11111111111111\\
156 1.11111111111111\\
157 1.11111111111111\\
158 1.11111111111111\\
159 1.11111111111111\\
160 1.11111111111111\\
};
\addlegendentry{FIR E[T]};

\addplot [
color=blue,
dashed,
line width=1.0pt
]
table[row sep=crcr]{
100 1.11111111111111\\
101 1.11111111111111\\
102 1.11111111111111\\
103 1.11111111111111\\
104 1.11111111111111\\
105 1.11111111111111\\
106 1.11111111111111\\
107 1.11111111111111\\
108 1.11111111111111\\
109 1.11111111111111\\
110 1.11111111111111\\
111 1.11111111111111\\
112 1.11111111111111\\
113 1.11111111111111\\
114 1.11111111111111\\
115 1.11111111111111\\
116 1.11111111111111\\
117 1.11111111111111\\
118 1.11111111111111\\
119 1.11111111111111\\
120 1.11111111111111\\
121 1.11111111111111\\
122 1.11111111111111\\
123 1.11111111111111\\
124 1.11111111111111\\
125 1.11111111111111\\
126 1.11111111111111\\
127 1.11111111111111\\
128 1.11111111111111\\
129 1.11111111111111\\
130 1.11111111111111\\
131 1.11111111111111\\
132 1.11111111111111\\
133 1.11111111111111\\
134 1.11111111111111\\
135 1.11111111111111\\
136 1.11111111111111\\
137 1.11111111111111\\
138 1.11111111111111\\
139 1.11111111111111\\
140 1.11111111111111\\
141 1.11111111111111\\
142 1.11111111111111\\
143 1.11111111111111\\
144 1.11111111111111\\
145 1.11111111111111\\
146 1.11111111111111\\
147 1.11111111111111\\
148 1.11111111111111\\
149 1.11111111111111\\
150 1.11111111111111\\
151 1.11111111111111\\
152 1.11111111111111\\
153 1.11111111111111\\
154 1.11111111111111\\
155 1.11111111111111\\
156 1.11111111111111\\
157 1.11111111111111\\
158 1.11111111111111\\
159 1.11111111111111\\
160 1.11111111111111\\
};
\addlegendentry{FIR asymptote};

\end{axis}
\end{tikzpicture}%
\caption{Normalized expected delay for FR and FIR as a function of $n_s$ for $\epsilon_s=0.1$ and $k_s = 100$.}
\label{fig:delay_somp_13}
\end{figure}
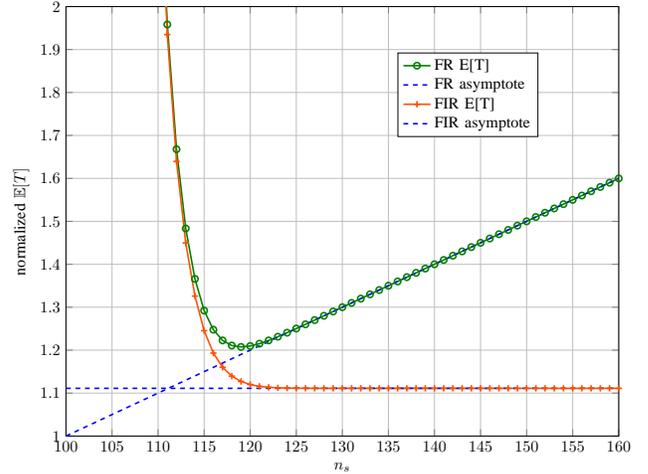

\section{Multicast Scenarios}
\label{sec:ms}
We are now concerned with multicast  to $u$ users, each independently experiencing a channel with erasure probability $\epsilon_s$.  As in the point-to-point case, we distinguish between transmission
scenarios depending on the type of coding used at the (inner) physical layer. Again, since the objective is
ultimate decoding of the file (this time by all users), rateless coding is necessary at either physical or 
packet level, but now coded symbols or packets have to be transmitted until all users are able to decode. 
We first analyze and then compare the IIR and FR schemes.

\subsection{IIR with $u$ Users}
Let $T_i^s(j)$  be the number of coded symbol-transmissions until user $j$, $j\in\{1,\dots,u\}$ is able to decode packet $i$, $i\in\{1,\dots,k_p\}$. Note that 
$T_i^s(j)$ are i.i.d.\ $NB(k_s,1-\epsilon_s)$.
Then the  number of coded symbol-transmissions until packet $i$ is decoded by all users,
is given by 
\[
T_i^s = \max_{j\in\{1,\dots,u\} } T_i^s(j),
\]
and is  therefore distributed as the maximum order statistic of $u$ of 
i.i.d.\ $NB(k_s,1-\epsilon_s)$.

In the IIR scheme, multicast of packet $i$  starts only when all users decode packet $i-1$, and
a chunk is decoded when all packets are decoded by all users. Therefore, the
number of symbol-transmissions until a chunk is decoded is given by 
\[
T^{\text{IIR}} = \sum_{i=1}^{k_p} T_i^s = 
\sum_{i=1}^{k_p} \max_{j\in\{1,\dots,u\} } T_i^s(j)
\]
and its average by
\[
\mathbb{E}\bigl[T^{\text{IIR}}\bigr]= k_p\cdot \mathbb{E}\bigl[T_1^s\bigr].
\]
\subsection{FR with $u$ Users}
In the FR scheme, each packet takes exactly $n_s$ symbol transmissions.
At that point some users will not be able to decode the packet, and that will happen with probability $\epsilon_p$ given by (\ref{eq:ep}).
Let $T^p(j)$  be the number of coded packet-transmissions until user $j$, $j\in\{1,\dots,u\}$ is able to decode the chunk. Note that  $T^p(j)$ are i.i.d.\
$NB(k_p,1-\epsilon_p)$. Therefore, the number of packet-transmissions until a chunk is decoded is given by
\[
T^p = \max_{j\in\{1,\dots,u\} } T^p(j)
\]
and the average number of symbol-transmissions until a chunk is decoded is given by 
\[
\mathbb{E}\bigl[T^{\text{FR}}\bigr]= n_s\cdot \mathbb{E}\bigl[T_p\bigr]
\]
\subsection{IIR vs.\ FR with $u$ Users}
We have seen that in the point-to-point case, coding ratelessly at the symbol level results in fewer transmissions on average. As we will see, that is not always the case for the multiple users. Earlier in this section, we have seen the following:
\begin{itemize}
\item The IIR scheme chunk download time (measured by the number of symbol-transmissions) is equal to the
``maximum order statistic of $u$ of $NB(k_s,1-\epsilon_s)$''  $\times$ $k_p$.
\item The FR scheme chunk download time (measured by the number of symbol-transmissions) is equal to the ``maximum order statistic of $u$ of $NB(k_p,1-\epsilon_p)$'' $\times$ $n_s$.
\end{itemize}

The expected value of the maximum order statistic of negative binomial random variables with parameters $(k,1-\epsilon)$ has been investigated in \cite{grabner1997maximum}. One result of this work is the following approximation:
\[
\log_{1/\epsilon}u+(k-1)
   \log_{1/\epsilon}\bigl[\log_{1/\epsilon}u\bigr]
\]
We found in our numerical simulations that this approximation is not precise enough in many cases. 
Instead, we found that the expectation  of the maximum order statistic of $u$ of $NB(k,1-\epsilon)$ behaves as follows (see Appendix for an argument):
  \begin{itemize}
  \item A good numerical approximation when $k$ is small is
 \[
 \log_{1/\epsilon}u+(k-1)
   \bigl[\log_{1/\epsilon}\log_{1/\epsilon}u+\log_{1/\epsilon}(1-\epsilon)/\epsilon\bigr]
 \]
 which also gives good qualitative description for larger $k$ that can be used to get an insight into
 the behaviour (if not the precise value) of the chunk download time for each of the schemes.
 \item  $k/(1-\epsilon)$ is a good approximation  when $k$ is very large or when $k$ is large compared to $u$, and also  when $\epsilon$ is very small (but for different probabilistic  reasons than when $k$ is large, see Appendix).
\end{itemize}

Our approximation indicates that the expected value of $u$ of $NB(k,1-\epsilon)$ will
\begin{enumerate}
\item increase with $u$,
\item have a lower rate of increase with $u$ as $k$ gets larger,
\item have a highr rate of increase with $u$ as $\epsilon$ gets larger.
\end{enumerate}
Figures~\ref{fig:MultiUser_EqualK} and \ref{fig:MultiUser_LargeKS} obtained by simulation clearly illustrate this behaviour. 
\begin{figure}[hbt]
\centering
%
%
%
\definecolor{mycolor1}{rgb}{0,0.498039215686275,0}%
\definecolor{mycolor2}{rgb}{0,0.749019607843137,0.749019607843137}%
\definecolor{mycolor3}{rgb}{1,0,1}%
\begin{tikzpicture}[scale=0.6]

\begin{axis}[%
width=4.83083333333333in,
height=3.76666666666667in,
scale only axis,
xmode=log,
xmin=2,
xmax=10000,
xminorticks=true,
xlabel={number of users},
xmajorgrids,
xminorgrids,
ymin=1,
ymax=2.8,
ylabel={normalized $\mathbb E[T]$},
ymajorgrids,
legend style={at={(0.17217789104278,0.827334504897574)},anchor=south west,draw=black,fill=white,legend cell align=left}
]
\addplot [
color=blue,
solid,
line width=2.0pt,
mark size=4.0pt,
mark=x,
mark options={solid}
]
table[row sep=crcr]{
2 1.2150228\\
4 1.2208788\\
8 1.2268536\\
16 1.2321612\\
32 1.23662242\\
64 1.24043997\\
128 1.24433012\\
256 1.24815493\\
512 1.25131496\\
1024 1.25495788\\
2048 1.25795908\\
4096 1.26038444\\
8192 1.2629684\\
};
\addlegendentry{Fixed Rate (FR)};

\addplot [
color=blue,
solid,
line width=2.0pt,
mark size=4.0pt,
mark=o,
mark options={solid}
]
table[row sep=crcr]{
2 1.130825\\
4 1.148417\\
8 1.163465\\
16 1.177507\\
32 1.190668\\
64 1.202003\\
128 1.213524\\
256 1.223751\\
512 1.234089\\
1024 1.243656\\
2048 1.252723\\
4096 1.261705\\
8192 1.270722\\
};
\addlegendentry{Infinite Incremental Redundancy (IIR)};

\addplot [
color=mycolor1,
solid,
line width=2.0pt,
mark size=4.0pt,
mark=x,
mark options={solid},
forget plot
]
table[row sep=crcr]{
2 1.40566384\\
4 1.41409842\\
8 1.42179882\\
16 1.42856138\\
32 1.43450641\\
64 1.44042503\\
128 1.445612\\
256 1.4502446\\
512 1.4548422\\
1024 1.45928796\\
2048 1.46334312\\
4096 1.4676942\\
8192 1.47083304\\
};
\addplot [
color=mycolor1,
solid,
line width=2.0pt,
mark size=4.0pt,
mark=o,
mark options={solid},
forget plot
]
table[row sep=crcr]{
2 1.280841\\
4 1.309173\\
8 1.332988\\
16 1.354436\\
32 1.374167\\
64 1.392675\\
128 1.409259\\
256 1.425208\\
512 1.440951\\
1024 1.455513\\
2048 1.469162\\
4096 1.482035\\
8192 1.495514\\
};
\addplot [
color=red,
solid,
line width=2.0pt,
mark size=4.0pt,
mark=x,
mark options={solid},
forget plot
]
table[row sep=crcr]{
2 1.6391952\\
4 1.650376\\
8 1.66021268\\
16 1.66902421\\
32 1.67669028\\
64 1.68379074\\
128 1.69041168\\
256 1.69682348\\
512 1.70238993\\
1024 1.70792378\\
2048 1.71306528\\
4096 1.717818\\
8192 1.723095\\
};
\addplot [
color=red,
solid,
line width=2.0pt,
mark size=4.0pt,
mark=o,
mark options={solid},
forget plot
]
table[row sep=crcr]{
2 1.471957\\
4 1.510273\\
8 1.543979\\
16 1.573998\\
32 1.60089\\
64 1.62619\\
128 1.649329\\
256 1.671501\\
512 1.692196\\
1024 1.712176\\
2048 1.731271\\
4096 1.749814\\
8192 1.767501\\
};
\addplot [
color=mycolor2,
solid,
line width=2.0pt,
mark size=4.0pt,
mark=x,
mark options={solid},
forget plot
]
table[row sep=crcr]{
2 1.9433169\\
4 1.95783966\\
8 1.9705489\\
16 1.98153905\\
32 1.99131252\\
64 2.00039232\\
128 2.00874624\\
256 2.01692527\\
512 2.02394275\\
1024 2.03112374\\
2048 2.03791762\\
4096 2.0439452\\
8192 2.04978848\\
};
\addplot [
color=mycolor2,
solid,
line width=2.0pt,
mark size=4.0pt,
mark=o,
mark options={solid},
forget plot
]
table[row sep=crcr]{
2 1.725976\\
4 1.777404\\
8 1.820608\\
16 1.860774\\
32 1.898174\\
64 1.932173\\
128 1.962075\\
256 1.991157\\
512 2.019742\\
1024 2.046681\\
2048 2.072419\\
4096 2.096419\\
8192 2.120311\\
};
\addplot [
color=mycolor3,
solid,
line width=2.0pt,
mark size=4.0pt,
mark=x,
mark options={solid},
forget plot
]
table[row sep=crcr]{
2 2.36532444\\
4 2.3832738\\
8 2.39904588\\
16 2.41338174\\
32 2.42638592\\
64 2.43803045\\
128 2.44892786\\
256 2.4589188\\
512 2.46815244\\
1024 2.4774282\\
2048 2.4856091\\
4096 2.4937294\\
8192 2.50160847\\
};
\addplot [
color=mycolor3,
solid,
line width=2.0pt,
mark size=4.0pt,
mark=o,
mark options={solid},
forget plot
]
table[row sep=crcr]{
2 2.080202\\
4 2.147045\\
8 2.208485\\
16 2.260704\\
32 2.309155\\
64 2.356264\\
128 2.396406\\
256 2.434324\\
512 2.472179\\
1024 2.507337\\
2048 2.541509\\
4096 2.573887\\
8192 2.606159\\
};
\end{axis}
\end{tikzpicture}%
\caption{Normalized expected delay for FR and IIR as a function of the number of users for $k_p = 100$ and $k_s = 100$, $\epsilon_s\in\{0.1, 0.2, 0.3.0.4,0.5\}$.}
\label{fig:MultiUser_EqualK}
\end{figure}
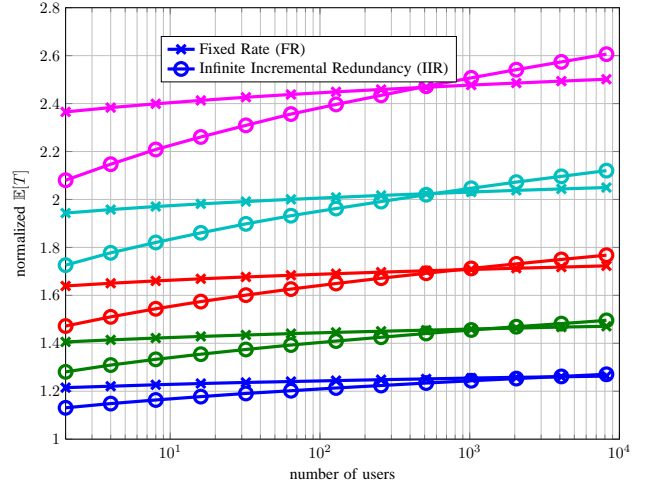
\begin{figure}[hbt]
\centering
%
%
%
\definecolor{mycolor1}{rgb}{0,0.498039215686275,0}%
\definecolor{mycolor2}{rgb}{0,0.749019607843137,0.749019607843137}%
\definecolor{mycolor3}{rgb}{1,0,1}%
\begin{tikzpicture}[scale=0.6]

\begin{axis}[%
width=4.82222222222222in,
height=3.77777777777778in,
scale only axis,
xmode=log,
xmin=2,
xmax=10000,
xminorticks=true,
xlabel={number of users},
xmajorgrids,
xminorgrids,
ymin=1,
ymax=2.5,
ylabel={normalized $\mathbb E[T]$},
ymajorgrids,
legend style={at={(0.166517857142856,0.828869047619054)},anchor=south west,draw=black,fill=white,legend cell align=left}
]
\addplot [
color=blue,
solid,
line width=2.0pt,
mark size=4.0pt,
mark=x,
mark options={solid}
]
table[row sep=crcr]{
2 1.148256657\\
4 1.150641725\\
8 1.153039884\\
16 1.15540416\\
32 1.15750823\\
64 1.159874912\\
128 1.161941832\\
256 1.16352975\\
512 1.164738816\\
1024 1.165903488\\
2048 1.16701135\\
4096 1.16820627\\
8192 1.169187648\\
};
\addlegendentry{Fixed Rate (FR)};

\addplot [
color=blue,
solid,
line width=2.0pt,
mark size=4.0pt,
mark=o,
mark options={solid}
]
table[row sep=crcr]{
2 1.1173629\\
4 1.1227975\\
8 1.1272951\\
16 1.1311616\\
32 1.1347112\\
64 1.1380739\\
128 1.1410679\\
256 1.1440375\\
512 1.1464936\\
1024 1.1491021\\
2048 1.15151\\
4096 1.1537949\\
8192 1.1561174\\
};
\addlegendentry{Infinite Incremental Redundancy (IIR)};

\addplot [
color=mycolor1,
solid,
line width=2.0pt,
mark size=4.0pt,
mark=x,
mark options={solid},
forget plot
]
table[row sep=crcr]{
2 1.3065884\\
4 1.310235559\\
8 1.313930115\\
16 1.317490848\\
32 1.32060519\\
64 1.322819974\\
128 1.324776068\\
256 1.326674316\\
512 1.328679456\\
1024 1.330453908\\
2048 1.332402624\\
4096 1.333953072\\
8192 1.335715687\\
};
\addplot [
color=mycolor1,
solid,
line width=2.0pt,
mark size=4.0pt,
mark=o,
mark options={solid},
forget plot
]
table[row sep=crcr]{
2 1.2600818\\
4 1.2685263\\
8 1.2754714\\
16 1.2817796\\
32 1.2873776\\
64 1.2925966\\
128 1.2971834\\
256 1.3018002\\
512 1.3058774\\
1024 1.3097799\\
2048 1.3134763\\
4096 1.3171309\\
8192 1.3206193\\
};
\addplot [
color=red,
solid,
line width=2.0pt,
mark size=4.0pt,
mark=x,
mark options={solid},
forget plot
]
table[row sep=crcr]{
2 1.505912496\\
4 1.510699462\\
8 1.515574784\\
16 1.51987472\\
32 1.523336928\\
64 1.526232334\\
128 1.52899425\\
256 1.531728576\\
512 1.53438291\\
1024 1.536921244\\
2048 1.5393496\\
4096 1.542023793\\
8192 1.54439495\\
};
\addplot [
color=red,
solid,
line width=2.0pt,
mark size=4.0pt,
mark=o,
mark options={solid},
forget plot
]
table[row sep=crcr]{
2 1.4421454\\
4 1.4541979\\
8 1.4641489\\
16 1.4729371\\
32 1.4807302\\
64 1.4879819\\
128 1.494792\\
256 1.5007421\\
512 1.5065792\\
1024 1.5120127\\
2048 1.5170316\\
4096 1.5219698\\
8192 1.5269592\\
};
\addplot [
color=mycolor2,
solid,
line width=2.0pt,
mark size=4.0pt,
mark=x,
mark options={solid},
forget plot
]
table[row sep=crcr]{
2 1.76895927\\
4 1.775376804\\
8 1.78123792\\
16 1.786432788\\
32 1.790725766\\
64 1.794770362\\
128 1.798548612\\
256 1.802081232\\
512 1.805725623\\
1024 1.80916533\\
2048 1.81259696\\
4096 1.815345504\\
8192 1.817876487\\
};
\addplot [
color=mycolor2,
solid,
line width=2.0pt,
mark size=4.0pt,
mark=o,
mark options={solid},
forget plot
]
table[row sep=crcr]{
2 1.6851649\\
4 1.701141\\
8 1.71465\\
16 1.7265489\\
32 1.7372617\\
64 1.7467939\\
128 1.7553745\\
256 1.76395\\
512 1.771498\\
1024 1.7788726\\
2048 1.7856108\\
4096 1.7921001\\
8192 1.7987846\\
};
\addplot [
color=mycolor3,
solid,
line width=2.0pt,
mark size=4.0pt,
mark=x,
mark options={solid},
forget plot
]
table[row sep=crcr]{
2 2.135244306\\
4 2.14326244\\
8 2.15110852\\
16 2.157496208\\
32 2.163184764\\
64 2.16849924\\
128 2.17355853\\
256 2.178450048\\
512 2.183079\\
1024 2.187524632\\
2048 2.190851047\\
4096 2.194278562\\
8192 2.197447056\\
};
\addplot [
color=mycolor3,
solid,
line width=2.0pt,
mark size=4.0pt,
mark=o,
mark options={solid},
forget plot
]
table[row sep=crcr]{
2 2.0251501\\
4 2.0465435\\
8 2.0645762\\
16 2.0800666\\
32 2.0940213\\
64 2.1072872\\
128 2.1190731\\
256 2.1302422\\
512 2.1400713\\
1024 2.1501773\\
2048 2.1593686\\
4096 2.1682663\\
8192 2.1769976\\
};
\end{axis}
\end{tikzpicture}%
\caption{Normalized expected delay for FR and IIR as a function of the number of users for $k_p = 100$ and $k_s = 1000$, $\epsilon_s\in\{0.1, 0.2, 0.3.0.4,0.5\}$.}
\label{fig:MultiUser_LargeKS}
\end{figure}
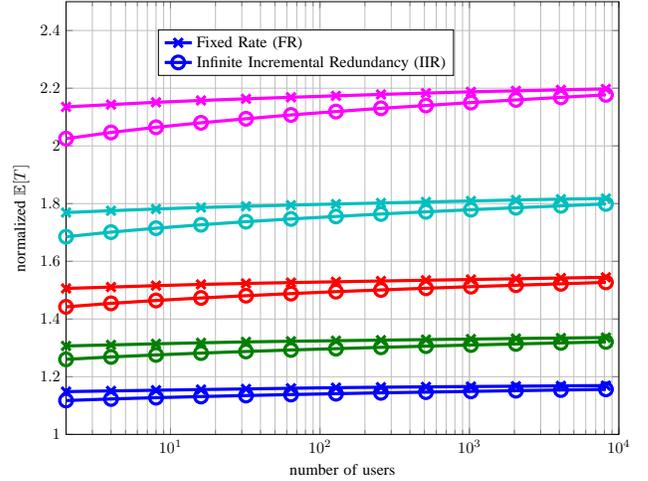

As can be seen from the figures, for a single user or a small number of users, the IIR scheme has a lower expected download time. This encourages the use of Hybrid-ARQ only. However, for a larger group of users, it is better to admit a certain packet erasure probability and rely on packet level coding for reliable transmission.

This result might be counter-intuitive, but can be explained in the following way:
Both the expected download time of IIR and FR involve the expected value of the maximum of negative binomial random variables. Only the parameters are different.
It is thus not immediately clear why the performance curves show the observed behavior.
One possible explanation is as follows:
For a large number of users, the time needed to satisfy \emph{all} $u$ users in the IIR scheme depends on the value of $\epsilon_s$. Similarly, in the FR scheme, the number of packets needed to satisfy all $u$ users depends on $\epsilon_p$. The value of $\epsilon_p$ can be influenced through the optimal choice of $n_s$ whereas $\epsilon_s$ cannot be changed.
We have thus the possibility to adapt the coin flipping bias associated with the packet transmission - and this leads to a smaller expected download time for FR as the number of users large number of users gets large.

\section{Conclusion and Further Work}
We investigated the relationship of Hybrid-ARQ and rateless packet level coding schemes.
Using a coin tossing model we observe the well-accepted fact that Hybrid-ARQ alone is advantageous in point-to-point scenarios. However, for multicast scenarios with many users, we observe that a packet level coding can outperform schemes that rely on Hybrid-ARQ only. In this context, we reviewed approximations for order statistics of negative binomial random variables and proposed improved approximations.

For future work, we target to extend these results to more general cases and add refined model assumptions.

\begin{appendix}
We here study the expected value of the maximum order statistic of $u$ of $NB(k,1-\epsilon)$.
Consider a game with $u$ players, each independently tossing a biased coin with the probability of head equal to $(1-\epsilon)$ until he sees $k$ heads. 
We separately analyze three cases:
\subsubsection{Small $k$, Large $u$}
The probability that a players did not see any heads after $t$ tosses is equal to $\epsilon^t$.
Therefore, the expected number of such players is $u\cdot\epsilon^t$, which is smaller than
$1$ when $t>t_1$, where
\[
t_1=\log_{1/\epsilon}u.
\]
The probability that a players saw only one head after $t$ tosses is equal to 
$t\cdot\epsilon^{(t-1)}(1-\epsilon)$.
Therefore, the expected number of such players is $u\cdot t\cdot\epsilon^{(t-1)}(1-\epsilon)$, which becomes equal to $t_1\cdot(1-\epsilon)/\epsilon$ when $t=t_1$. 
By the same argument as before, these players need another $\Delta t = \log_{1/\epsilon}  t_1\cdot(1-\epsilon)/\epsilon$ tosses on average until each of these users has seen at least one more head on average. 
Therefore, after $t_1 + 
\log_{1/\epsilon}t_1\cdot(1-\epsilon)$ tosses the average, the number of users who have seen fever than $2$ heads is smaller than 1.  

Let $t_k$ denote 
the moment at which the expected number of players who have seen $k$ heads is smaller than $1$. 
By the above reasoning, we have that 
\[
t_2 = t_1 + \Delta t = \log_{1/\epsilon}u+\log_{1/\epsilon}\log_{1/\epsilon}u+\log_{1/\epsilon}(1-\epsilon)/\epsilon
\]
and for small $k$, we have
\[
t_k\approx  \log_{1/\epsilon}u+(k-1)
   \bigl[\log_{1/\epsilon}\log_{1/\epsilon}u+\log_{1/\epsilon}(1-\epsilon)/\epsilon\bigr].
 \]
 
 \subsubsection{Large $k$}
 For large $k$, the number of tosses that the players have to make is also large, and by the law of large numbers, each player will take approximately $k/(1-\epsilon)$ to see $k$ heads. 
  \subsubsection{Small $\epsilon$}
  When $\epsilon$ is small, then $1-\epsilon$ is large, and thus almost all users see a head each time they toss their coins. 
  Note that, as predicted by 2) and 3) above, the dependence of the download time on the number of users 
  is small when $k$ is large or $\epsilon$ is small.
  
  Figure~\ref{fig:OrderStatApprox} illustrates the tightness of our bound and the bound of \cite{grabner1997maximum}.
\begin{figure}[hbt]
\centering
%
%
\begin{tikzpicture}[scale=0.6]

\begin{axis}[%
width=4.82222222222222in,
height=3.77777777777778in,
scale only axis,
xmode=log,
xmin=2,
xmax=1024,
xminorticks=true,
xmajorgrids,
xminorgrids,
ymode=log,
ymin=0.01,
ymax=200,
yminorticks=true,
ymajorgrids,
yminorgrids,
legend columns=2,
legend style={at={(1,0.0)},anchor=south east,draw=black,fill=white,legend cell align=left}
]

\addplot [
color=red,
solid,
line width=1.0pt
]
table[row sep=crcr]{
2 113.08219\\
4 114.81536\\
8 116.37314\\
16 117.75774\\
32 119.03178\\
64 120.24182\\
128 121.35954\\
256 122.40433\\
512 123.40968\\
1024 124.37281\\
};
\addlegendentry{k=100, simulation};

\addplot [
color=red,
solid,
line width=1.0pt,
mark=o,
mark options={solid}
]
table[row sep=crcr]{
2 -51.3166025421142\\
4 -21.2136029757161\\
8 -3.47953833353964\\
16 9.19042658634604\\
32 19.0855478698076\\
64 27.2255212241864\\
128 34.1542833932811\\
256 40.1965161397361\\
512 45.5616458576908\\
1024 50.3926674188616\\
};
\addlegendentry{k=100, approximation};

\addplot [
color=red,
solid,
line width=1.0pt,
mark=asterisk,
mark options={solid}
]
table[row sep=crcr]{
2 43.153405892379\\
4 73.2564054587771\\
8 90.9904701009535\\
16 103.660435020839\\
32 113.555556304301\\
64 121.69552965868\\
128 128.624291827774\\
256 134.666524574229\\
512 140.031654292184\\
1024 144.862675853355\\
};
\addlegendentry{k=100, new approximation};

\addplot [
color=red,
dashed,
line width=1.0pt
]
table[row sep=crcr]{
2 111.111111111111\\
4 111.111111111111\\
8 111.111111111111\\
16 111.111111111111\\
32 111.111111111111\\
64 111.111111111111\\
128 111.111111111111\\
256 111.111111111111\\
512 111.111111111111\\
1024 111.111111111111\\
};
\addlegendentry{k=100, large k approximation};

\addplot [
color=green!50!black,
solid,
line width=1.0pt
]
table[row sep=crcr]{
2 11.69105\\
4 12.30296\\
8 12.89023\\
16 13.45461\\
32 13.99116\\
64 14.51656\\
128 15.01451\\
256 15.49561\\
512 15.97423\\
1024 16.43166\\
};
\addlegendentry{k=10, simulation};

\addplot [
color=green!50!black,
solid,
line width=1.0pt,
mark=o,
mark options={solid}
]
table[row sep=crcr]{
2 -4.39148205322494\\
4 -1.38118209658513\\
8 0.504669230579982\\
16 1.93014785571866\\
32 3.10336796845515\\
64 4.11702917854776\\
128 5.02058028088726\\
256 5.84353779935042\\
512 6.60494049704083\\
1024 7.31778790775089\\
};
\addlegendentry{k=10, approximation};

\addplot [
color=green!50!black,
solid,
line width=1.0pt,
mark=asterisk,
mark options={solid}
]
table[row sep=crcr]{
2 4.19670053172898\\
4 7.20700048836879\\
8 9.0928518155339\\
16 10.5183304406726\\
32 11.6915505534091\\
64 12.7052117635017\\
128 13.6087628658412\\
256 14.4317203843043\\
512 15.1931230819948\\
1024 15.9059704927048\\
};
\addlegendentry{k=10, new approximation};

\addplot [
color=green!50!black,
dashed,
line width=1.0pt
]
table[row sep=crcr]{
2 11.1111111111111\\
4 11.1111111111111\\
8 11.1111111111111\\
16 11.1111111111111\\
32 11.1111111111111\\
64 11.1111111111111\\
128 11.1111111111111\\
256 11.1111111111111\\
512 11.1111111111111\\
1024 11.1111111111111\\
};
\addlegendentry{k=10, large k approximation};


\addplot [
color=blue,
solid,
line width=1.0pt
]
table[row sep=crcr]{
2 1.21001\\
4 1.38733\\
8 1.65548\\
16 1.97994\\
32 2.27831\\
64 2.54061\\
128 2.85593\\
256 3.17827\\
512 3.45087\\
1024 3.74811\\
};
\addlegendentry{k=1, simulation};

\addplot [
color=blue,
solid,
line width=1.0pt,
mark=o,
mark options={solid}
]
table[row sep=crcr]{
2 0.301\\
4 0.602\\
8 0.903\\
16 1.204\\
32 1.505\\
64 1.806\\
128 2.107\\
256 2.408\\
512 2.709\\
1024 3.01\\
};
\addlegendentry{k=1, approximation};

\addplot [
color=blue,
solid,
line width=1.0pt,
mark=asterisk,
mark options={solid}
]
table[row sep=crcr]{
2 0.301029995663981\\
4 0.602059991327962\\
8 0.903089986991943\\
16 1.20411998265592\\
32 1.50514997831991\\
64 1.80617997398389\\
128 2.10720996964787\\
256 2.40823996531185\\
512 2.70926996097583\\
1024 3.01029995663981\\
};
\addlegendentry{k=1, new approximation};

\addplot [
color=blue,
dashed,
line width=1.0pt
]
table[row sep=crcr]{
2 1.11111111111111\\
4 1.11111111111111\\
8 1.11111111111111\\
16 1.11111111111111\\
32 1.11111111111111\\
64 1.11111111111111\\
128 1.11111111111111\\
256 1.11111111111111\\
512 1.11111111111111\\
1024 1.11111111111111\\
};
\addlegendentry{k=1, large k approximation};

\end{axis}
\end{tikzpicture}%
\caption{Order statistics approximations and simulation results for $\epsilon=0.1$.
Both approximations are shown for different values of $k$. Dashed lines show the approximation that holds if $k \gg u$. For $k=1$, both approximations match.}
\label{fig:OrderStatApprox}
\end{figure}
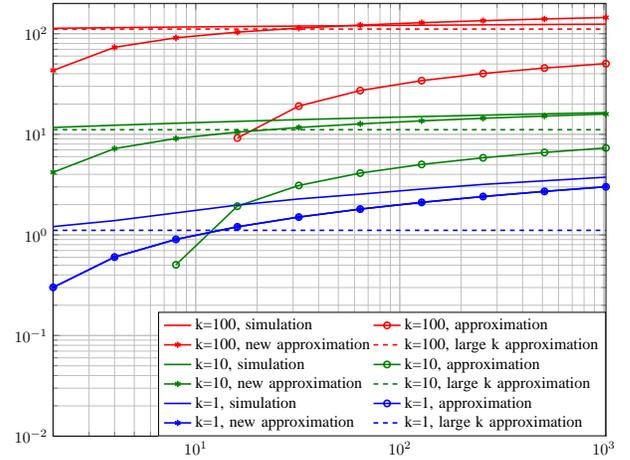
\end{appendix}
\bibliographystyle{IEEEtran}
\bibliography{IEEEabrv,references}
\end{document}